\documentclass[a4paper,12pt]{article}
      
\usepackage[T1]{fontenc}
\usepackage[latin1]{inputenc}
\usepackage{latexsym}
\usepackage{amssymb} 
\usepackage{fancyvrb}

\usepackage[T1]{fontenc}
\usepackage[latin1]{inputenc}
\usepackage[english]{babel}
\usepackage{latexsym} 
\usepackage{amssymb}
\usepackage{amsmath}
\usepackage{array}
\usepackage{bbding} 
\usepackage{mathpazo} 
\usepackage[dvips]{graphicx}
\usepackage{amsthm}
\usepackage{xcolor}
\usepackage{setspace}
\usepackage{caption}
\usepackage{subcaption}
\onehalfspacing

\allowdisplaybreaks

\usepackage[top=2.5cm,bottom=2.5cm,right=2.5cm,left=2.5cm,headsep=0.4in,includehead]{geometry}
\usepackage{hyperref}
\theoremstyle{plain}
\newtheorem{theor}{Theorem}[section]

\newtheorem{defi}{Definition}[section]
\newtheorem{exa}{Example}[section]

\theoremstyle{definition}

\theoremstyle{remark}

\newcommand{\reviewtimetoday}[2]{

\reviewtimetoday{\today}{Draft Version v.0.1}

\begin{document}

\title{\textbf{Sharing delay costs in stochastic projects}}

\author{J.C. Gon\c{c}alves-Dosantos$^1$, I. Garc\'{\i}a-Jurado, J. Costa}

\date{\small{\emph{Grupo MODES, Departamento de Matem\'aticas, Universidade da Coru\~{n}a,\\ Campus de Elvi\~{n}a, 15071 A Coru\~{n}a}}}

\maketitle
\onehalfspacing

\abstract{An important problem in project management is determining ways to distribute amongst activities the costs that are incurred when a project is delayed because some activities end later than expected. In this study, we address this problem in stochastic projects, where the durations of activities are unknown but their corresponding probability distributions are known. We propose and characterise an allocation rule based on the Shapley value, illustrate its behaviour by using examples, and analyse features of its calculation for large problems.}

\vspace{0.5cm}

\noindent
{\bf Keywords.} Project management, scheduling, delay cost, allocation rule, cooperative game theory, Shapley value.

\footnotetext[1]{
	Corresponding author. E-mail: juan.carlos.goncalves@udc.es.} 

\section{Introduction}
Project management is a field within operations research that provides managers with techniques to select, plan, execute, and monitor projects. An important issue in project management is time management, which generally call for careful planning of project activities to meet various project delivery dates, especially the final delivery date. Normally, a delay in the final delivery date incurs a cost that is often specified by contract. Sometimes, projects are not developed by one agent but a group of agents. When there is a delay in one of such joint projects, the manner of allocating the delay cost amongst the several participating agents may not be clear. This study deals with the problem of sharing delay costs in a joint project by using cooperative game theory.

In the last few years, several papers have been written proposing and studying allocation rules for delay costs. Berganti\~{n}os and S\'anchez (2002) proposed a rule based on the serial cost-sharing problem. Br\^anzei et al. (2002) provided two rules using, respectively, a game theoretical and bankruptcy-based approach. In Castro et al. (2007), the core of a class of transferable utility cooperative game (in short, a TU-game) arising from a delay cost-sharing problem was studied. Est\'evez-Fern\'andez et al. (2007) and Est\'evez-Fern\'andez (2012) dealt with some classes of TU-games associated with projects whose activities might have been delayed or advanced by generating delay costs or acceleration benefits of the corresponding projects. In Berganti\~{n}os et al. (2018), a consistent rule based on the Shapley value was introduced and analysed. All these papers tackle {\em deterministic scheduling problems with delays}. One such problem is that of a deterministic delayed project. By {\em deterministic project}, we mean a set of activities to be performed with respect to an order of precedence and a description of their estimated durations; by {\em delayed deterministic project}, we mean a project that has been performed, description of the observed durations of the activities according to which the project has lasted longer than expected, and cost function that indicates the delay cost associated with the durations of the activities. 

A natural extension of deterministic problems with delays can be found in {\em stochastic scheduling problems with delays}, which we introduce and analyse in this study. To the best of our knowledge, these problems have not been treated in literature, although Castro et al. (2014) considered the problem of allocating slacks in a stochastic PERT network,\footnote{PERT is the acronym of Program Evaluation and Review Technique, a tool used in project management, first developed by the United States Navy in the 1950s.} which is a related but different problem. Herroelen and Leus (2005) surveyed literature on project management under uncertainty. In a stochastic scheduling problem with delays, the manager has a description of the probability distributions of the random variables modelling the durations of the activities instead of simply their estimated durations. In most cases, managers have information about random variables---for instance, their empirical distributions---based on the durations of similar activities in past projects of the same type. 

The remainder of this paper is organised as follows: In Section 2, we describe the problem and indicate its characteristics in a deterministic setting. In Section 3, we propose a rule based on the Shapley value in this context, study its properties, and provide an axiomatic characterisation based on a balancedness property. We also illustrate the performance of our rule by using two examples. Finally, Section 4 addresses some computational issues related to our rule.

\section{The problem}
In this section, we describe the problem with which we deal. We first formally introduce a deterministic scheduling problem with delays following Berganti\~{n}os et al. (2018):

\begin{defi}
	\label{dsp}
	A deterministic scheduling problem with delays $P$ is a tuple $(N,\prec,x^0,x,C)$ where:
	\begin{itemize}
		\item $N$ is the finite non-empty set of activities. 
		\item $\prec$ is a binary relation over $N$ satisfying asymmetry and transitivity. For every $i,j\in N$, we interpret $i\prec j$ as ``activity $j$ cannot start until activity $i$ has finished".
		\item $x^0\in\mathbb{R}^N$ is the vector of planned durations. For every $i\in N$, $x^0_i$ is a non-negative real number indicating the planned duration of activity $i$.
		\item $x\in\mathbb{R}^N$ is the vector of actual durations. For every $i\in N$, $x_i\geq x_i^0$ indicates the duration of activity $i$.
		\item $C:\mathbb{R}^N\rightarrow \mathbb{R}$ is the delay cost function. We assume that $C$ is non-decreasing (i.e., $y_i\leq z_i\ \forall i\in N\Rightarrow C(y)\leq C(z)$), and that $C(x^0)=0$.
	\end{itemize}
We denote by ${\cal P}^N$ the set of deterministic scheduling problems with delays with player set $N$, and by ${\cal P}$, the set of deterministic scheduling problems with delays.
\end{defi}

Note that the first three items of a deterministic scheduling problem with delays characterise a project. Operational researchers have developed several methodologies for project management. In particular, the minimum duration of a project $(N,\prec,x^0)$, provided that all restrictions imposed by $\prec$ are satisfied, can be obtained as the solution of a linear programming problem, and thus, can be easily computed. We denote the minimum duration of $(N,\prec,x^0)$ by $d(N,\prec,x^0)$. Alternatively, $d(N,\prec,x^0)$ can be calculated using a project planning methodology like PERT (see, for instance, Hillier and Lieberman (2001) for details on project planning). The delay cost function $C$ in Definition \ref{dsp} is rendered in a general way but typically depends on the minimum duration of the project, i.e., $C(y)=c(d(N,\prec ,y))$ for a non-decreasing function $c:\mathbb{R}\rightarrow \mathbb{R}$ with $c(d(N,\prec,x^0))=0$.

In a deterministic scheduling problem with delays $P$, the main question to be answered is how to allocate $C(x)$ amongst the activities in a fair way. This issue has been taken up, for instance, in Berganti\~{n}os et al. (2018); they introduce the Shapley rule in this context.

\begin{defi}
	\label{defdetrul}
	A rule for deterministic scheduling problems with delays is a map $\varphi$ on ${\cal P}$ that assigns to each $P=(N,\prec,x^0,x,C)\in{\cal P}^N$ a vector $\varphi(P)\in\mathbb{R}^N$ satisfying:
	\begin{enumerate}
		\item Efficiency (EFF).
		$\sum_{i\in N} \varphi_i(P)=C(x)$.
		\item Null Delay (ND).
		$\varphi_i(P)=0$ when $x_i=x_i^0$.
	\end{enumerate}
\end{defi}

\begin{defi}
The Shapley rule for deterministic scheduling problems with delays $Sh$ is defined by
$Sh(P)=\varPhi(v^P)$
where for all $P\in {\cal P}^N$ 
\begin{itemize}
	\item 
	$v^P$ is the TU-game with set of players $N$ given by
	$v^P(S)=C(x_S,x^0_{N\setminus S})$	for all $S\subseteq N$ (where $x_S,x^0_{N\setminus S}$ denotes the vector in $\mathbb{R}^N$ whose $i$-th component is $x_i$ if $i\in S$ or $x_i^0$ if $i\in N\setminus S$), and 
	\item 
	$\varPhi(v^P)$ denotes the proposal of the Shapley value for $v^P$. 
\end{itemize}
\end{defi}

For those unfamiliar with cooperative game theory, a TU-game is a pair $(N,v)$ where $N$ is a non-empty finite set, and $v$ is a map from $2^N$ to $\mathbb{R}$ with $v(\emptyset)=0$. We say that $N$ is the player set of the game and $v$ is the characteristic function of the game, and we usually identify $(N,v)$ with its characteristic function $v$. We denote by $G^N$ the set of all TU-games with player set $N$, and by $G$ the set of all TU-games. The Shapley value is a map $\varPhi$ that associates with every TU-game $(N,v)$ a vector $\varPhi (v)\in \mathbb{R}^N$ satisfying $\sum_{i\in N}\varPhi_i (v)=v(N)$ and providing a fair allocation of $v(N)$ to the players in $N$. The explicit formula of the Shapley value for every TU-game $(N,v)$ and every $i\in N$ is given by: 
\[\varPhi_i (v)=\sum _{S\subseteq N\setminus \{i\}}{\frac {(|N|-|S|-1)!\;|S|!}{|N|!}}(v(S\cup \{i\})-v(S)).\]
Since its introduction by Shapley (1953), the Shapley value has proved to be one of the most important rules in cooperative game theory and has applications in many practical problems (see, for instance, Flores et al. (2007)).

Berganti\~{n}os et al. (2018) showed that the Shapley value has good properties in this context and provided an axiomatic characterisation of their Shapley rule by using a consistency property. In this paper, we introduce a generalization of the model and the Shapley rule described above by assuming that the durations of the activities are stochastic. Let us first introduce and motivate interest in our model.

\begin{defi}
	\label{ssp}
	A stochastic scheduling problem with delays $SP$ is a tuple $(N,\prec,X^0,x,C)$ where:
	\begin{itemize}
		\item $N$ is the finite non-empty set of activities.
		\item $\prec$ is a binary relation over $N$ satisfying asymmetry and transitivity. 
		\item $X^0\in\mathbb{R}^N$ is a vector of independent random variables. For every $i\in N$, $X^0_i$ is a non-negative random variable describing the duration of activity $i$.
		\item $x\in\mathbb{R}^N$ is the vector of actual non-negative durations. 
		\item $C:\mathbb{R}^N\rightarrow \mathbb{R}$ is the delay cost function. We assume that $C$ is non-negative and non-decreasing.
	\end{itemize}
	We denote by ${\cal SP}^N$ the set of stochastic scheduling problems with delays with player set $N$, and by ${\cal SP}$ the set of all stochastic scheduling problems with delays.
\end{defi}

Note that in a stochastic scheduling problem with delays, the durations are non-negative random variables instead of non-negative numbers. In general, the duration of an activity can now take any non-negative real value, and  conditions generalising $x_i\geq x^0_i$ or $C(x^0)=0$ as in Definition \ref{dsp} cannot be stated. In the stochastic setting, a delay in an activity is unclear. However, if the actual duration of an activity is longer than the upper bound of its distribution support, it has thus been delayed. Moreover, if its duration is in the 99th percentile of the distribution of its duration, one may think that it has been delayed somewhat. However, what should we think when its actual duration is in the 56th percentile? In the deterministic setting, we can clearly observe when an activity has been delayed. Another novelty in the stochastic setting is that an activity may somehow be delayed, but it may also somehow be ahead of schedule (for instance, when its duration is in the first percentile). In the deterministic setting, by contrast, the case $x_i< x^0_i$ is generally discarded. In any case, although we propose our model in general, our objective is to distribute delay costs when they occur (because $P(x_i\geq X^0_i)$ is large, at least for some $i\in N$), and in situations in which there should not be delays a priori, in the sense that $P(C(X^0)=0)$ is large.

We give next the definition of a rule in this setting. As the meaning of a delay is not clear, this definition does not contain a kind of null delay property, as in Definition \ref{defdetrul}. 

\begin{defi}
	\label{defstochrule}
	A rule for stochastic scheduling problems with delays is a map $\psi$ on ${\cal SP}$ that assigns to each $SP=(N,\prec,X^0,x,C)\in{\cal SP}^N$ a vector $\psi(SP)\in\mathbb{R}^N$ satisfying 
		$\sum_{i\in N} \psi_i(SP)=C(x)$.
\end{defi}

 A first approach to deal with a stochastic scheduling problem with delays is to build from it an associated deterministic problem. More precisely, for a given $SP=(N,\prec,X^0,x,C)\in{\cal SP}^N$, it is natural to associate with it the problem $\overline{SP}=(N,\prec,E(X^0),x,C)$, where $E(X^0)=(E(X^0_i))_{i\in N}$, $E(X^0_i)$ denotes the mathematical expectation of random variable $X_i^0$. This approach encounters a technical obstacle: $\overline{SP}$ is not always a deterministic scheduling problem with delays in the sense of Definition \ref{dsp} because $E(X^0_i)$ can be greater than $x_i$ for some $i$, and $C(E(X^0))$ may be different from zero. This obstacle can be overcome with small adjustments in the definition of an associated deterministic problem. Besides, in many particular examples, we do not encounter this obstacle. In any case, this approach is not the most appropriate because it does not use all the relevant information given in the original problem. Let us illustrate this shortcoming in the following example:

\begin{exa}
	\label{example1}
	Consider the stochastic scheduling problem with delays $SP=(N,\prec,X^0,x,C)$ given by:
	$$\begin{tabular}{|c||c|c|}
	\hline
	$N$&1&2\\\hline\hline
	$\prec$&-&-\\\hline
	$X^0$&$U(0,10)$&$U(2,8)$\\\hline
	$x$&7&7\\\hline
	\end{tabular}$$
	and, for every $y\in\mathbb{R}^N$,
	$$C(y)=\left\{\begin{array}{lc}
	0&\mbox{if $d(N,\prec,y)\leq 6$,}\\
	d(N,\prec,y)-6&\mbox{otherwise.}
	\end{array}\right.$$
	Note that for all $i\in N$ the $i$-th column displays:  
	\begin{itemize}
		\item Activities that precede activity $i$. In this example, $\prec=\emptyset$, i.e., the two activities can be carried out simultaneously. In general, the row corresponding to $\prec$ only shows the immediate precedences, i.e., some elements of $\prec$, but the entire $\prec$ can be easily obtained as the smallest transitive binary relation over $N$ that contains the given elements of $\prec$. An illustration of this can be found in Example \ref{example2}.
		\item The distribution of $X^0_i$. In this case, $X^0_1$ and $X^0_2$ are random variables with a uniform distribution of $U(0,10)$ and $U(2,8)$, respectively.
		\item $x_i$, the duration of $i$; in this case, $x=(7,7)$.
		\end{itemize}  
Note that in this example, $E(X^0_1)=E(X^0_2)=5$, and activities $1$ and $2$ are indistinguishable in $\overline{SP}$. Hence, the anonymity property satisfied by the Shapley rule for deterministic scheduling problems with delays (see Berganti\~{n}os et al. (2018)) implies that $Sh (\overline{SP})=(\frac{1}{2},\frac{1}{2})$. However, activities $1$ and $2$ are distinguishable in $SP$ because the expected duration of the project conditioned to $x_1=7$ is smaller than the expected duration of the project conditioned to $x_2=7$. It seems that a fair rule should take this into account and allocate to activity $2$ a larger part of the delay cost.
\end{exa}

In the next section, we provide a rule for stochastic scheduling problems with delays that overcomes the technical obstacle described above and, more importantly, the drawback described in Example \ref{example1}.

\section{ Shapley rule for stochastic scheduling problems with delays}

In this section, we define and study the Shapley rule for stochastic scheduling problems with delays.

\begin{defi}
	The Shapley rule for stochastic scheduling problems with delays $SSh$ is defined by
	$SSh(SP)=\varPhi(v^{SP})$
	where for all $SP\in {\cal SP}^N$ 
	\begin{itemize}
		\item 
		$v^{SP}$ is the TU-game with set of players $N$ given by
		$v^{SP}(S)=E(C(x_S,X^0_{N\setminus S}))$
		for all non-empty $S\subseteq N$,\footnote{As in all TU-games, we define $v^{SP}(\emptyset )=0$.} and 
		\item 
		$\varPhi(v^{SP})$ denotes the proposal of the Shapley value for $v^{SP}$. 
	\end{itemize}
\end{defi}

This rule inherits many properties of the Shapley value. For instance, it is easy to check that it satisfies the correspondingly modified versions of the properties proved in Berganti\~{n}os et al. (2018) for the Shapley rule for deterministic scheduling problems with delays. In this study, we focus on a different property of the Shapley value and how to adapt it to our context: the balancedness property.

A rule for stochastic scheduling problems with delays satisfies the balancedness property if it treats all pairs of activities in a balanced way, which more precisely means that for every pair of activities $i$ and $j$, the effect of the elimination of $i$ on the allocation to $j$ (according to the rule) is equal to the effect of the elimination of $j$ on the allocation to $i$. To write this property formally, consider a stochastic scheduling problem with delays $SP=(N,\prec,X^0,x,C)\in {\cal SP}^N$, with $|N|\geq 2$, and $i\in N$. Now, we define the resulting problem if activity $i$ is eliminated $SP_{-i}\in {\cal SP}^{N\setminus i}$ by
$$SP_{-i}=(N\setminus i,\prec_{-i},X^0_{-i},x_{-i},C_{-i})$$
where:
\begin{itemize}
	\item $\prec_{-i}$ is the restriction of $\prec$ to $N\setminus i$,
	\item $X^0_{-i}$ is the vector equal to $X^0$ after deleting its $i$-th component,
	\item $x_{-i}$ is the vector equal to $x$ after deleting its $i$-th component, and
	\item $C_{-i}:\mathbb{R}^{N\setminus i}\rightarrow\mathbb{R}$ is given by $C_{-i}(y)=E(C(y, X_i^0))$, for all $y\in\mathbb{R}^{N\setminus i}$.
\end{itemize}
We now formally write the balancedness property.

\vspace*{0.25cm}

\noindent
{\bf Balancedness}. A rule for stochastic scheduling problems with delays $\psi$ satisfies the balancedness property when
$$\psi_i(SP)-\psi_i(SP_{-j})=\psi_j(SP)-\psi_j(SP_{-i})$$
for all $SP\in{\cal SP}^N$, all finite $N$, and all $i,j\in N$ with $i\neq j$.

\vspace*{0.25cm}

The following theorem shows that the balancedness property characterises the Shapley rule.

\begin{theor}
The Shapley rule is the unique rule for stochastic scheduling problems with delays that satisfies the balancedness property.
\end{theor}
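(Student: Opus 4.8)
The plan is to recognise the balancedness property as the exact counterpart, within stochastic scheduling problems, of Myerson's balanced contributions property for the Shapley value, and thereby reduce the theorem to that classical characterisation. The bridge between the two settings is a single identity: the TU-game attached to a reduced problem is precisely the subgame obtained by restricting the original game to the surviving players. Thus the first step I would carry out is to prove that, for every $SP=(N,\prec,X^0,x,C)\in{\cal SP}^N$ and every $i\in N$,
\[ v^{SP_{-i}}(S)=v^{SP}(S)\qquad\text{for all }S\subseteq N\setminus i, \]
so that $v^{SP_{-i}}$ coincides with the restriction $v^{SP}|_{N\setminus i}$ of $v^{SP}$ to $N\setminus i$.

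To prove this identity I would simply unfold the definitions. By construction $v^{SP_{-i}}(S)=E\big(C_{-i}((x_{-i})_S,(X^0_{-i})_{(N\setminus i)\setminus S})\big)$, while $C_{-i}(y)=E(C(y,X^0_i))$. Substituting the second expression into the first produces two nested expectations, one over $X^0_{-i}$ and one over $X^0_i$; because the components of $X^0$ are independent, Fubini's theorem (equivalently, the tower property of conditional expectation) merges them into a single expectation over the whole vector $X^0$, in which coordinate $i$ receives $X^0_i$ and each $j\in(N\setminus i)\setminus S$ receives $X^0_j$. This is exactly $E(C(x_S,X^0_{N\setminus S}))=v^{SP}(S)$. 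I expect this measure-theoretic interchange---justified by independence under the mild integrability needed for the expectations to exist---to be the only genuinely delicate point; everything else is bookkeeping.

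Granting the identity, the \emph{existence} half is immediate. Since $SSh(SP)=\varPhi(v^{SP})$ and $SSh(SP_{-i})=\varPhi(v^{SP}|_{N\setminus i})$ by the identity just proved, the balancedness equation for $SSh$ is nothing but the balanced contributions property of the Shapley value applied to $v^{SP}$, namely
\[ \varPhi_i(v^{SP})-\varPhi_i(v^{SP}|_{N\setminus j})=\varPhi_j(v^{SP})-\varPhi_j(v^{SP}|_{N\setminus i}). \]
This identity is well known (Myerson, 1980) and, if a self-contained argument is preferred, can be checked directly from the explicit Shapley formula. That $SSh$ is a rule at all follows from efficiency of the Shapley value together with $v^{SP}(N)=E(C(x))=C(x)$.

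For \emph{uniqueness} I would argue by induction on $|N|$ that balancedness, together with the efficiency built into the definition of a rule, determines the allocation completely. When $|N|=1$, efficiency forces $\psi_i(SP)=C(x)$, so any two rules agree. Assuming the claim for all problems with fewer than $n$ players, let $\psi$ and $\psi'$ both satisfy balancedness on a problem with $|N|=n\ge 2$. Each $SP_{-j}$ has $n-1$ players, so $\psi(SP_{-j})=\psi'(SP_{-j})$ by the induction hypothesis. Rewriting balancedness as $\psi_i(SP)-\psi_j(SP)=\psi_i(SP_{-j})-\psi_j(SP_{-i})$ shows that every pairwise difference $\psi_i(SP)-\psi_j(SP)$ is fixed by values on the smaller problems, hence equals the corresponding difference for $\psi'$. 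Since $\sum_{j\in N}\big(\psi_i(SP)-\psi_j(SP)\big)=n\,\psi_i(SP)-C(x)$ and the analogous identity holds for $\psi'$, while each summand agrees and $\sum_{j}\psi_j(SP)=C(x)=\sum_{j}\psi'_j(SP)$ by efficiency, we conclude $\psi_i(SP)=\psi'_i(SP)$ for every $i$. Taking $\psi'=SSh$, which is admissible by the existence half, yields $\psi=SSh$ and completes the argument.
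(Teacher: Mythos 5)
Your proposal is correct and follows essentially the same route as the paper: the same reduction of the key identity $v^{SP_{-i}}=v^{SP}|_{N\setminus i}$ to independence of the components of $X^0$, the same appeal to Myerson's balanced contributions property for existence, and a uniqueness argument that is the paper's minimal-counterexample reasoning recast as an explicit induction on $|N|$.
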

\begin{proof}
Let us first check that the Shapley rule satisfies the balancedness property. Take $SP=(N,\prec,X^0,x,C)\in{\cal SP}^N$ and $i,j\in N$ with $i\neq j$. Then,
\begin{equation}
\label{e1}
SSh_i(SP)-SSh_i(SP_{-j})=\Phi_i(v^{SP})-\Phi_i(v^{SP_{-j}}),
\end{equation}
\begin{equation}
\label{e2}
SSh_j(SP)-SSh_j(SP_{-i})=\Phi_j(v^{SP})-\Phi_j(v^{SP_{-i}}).
\end{equation}
Now, for every $k\in N$, $v^{SP_{-k}}$ is a TU-game with set of players $N\setminus k$. For every non-empty $S\subseteq N\setminus k$, \footnote{To facilitate the reading of this proof, when dealing with the mathematical expectation of a random vector, we explicitly indicate the components of the vector to which the mathematical expectation refers.}
\begin{eqnarray}
\nonumber
v^{SP_{-k}}(S)&=&E_{N\setminus (S\cup k)}(C_{-k}(x_S,X^0_{N\setminus (S\cup k)}))\\
\nonumber
&=&E_{N\setminus (S\cup k)}(E_k(C(x_S,X^0_{N\setminus (S\cup k)}, X^0_k)))
\end{eqnarray}
Now, the independence of the components of $X^0$ implies that
\begin{equation}
\nonumber
v^{SP_{-k}}(S)=E_{N\setminus S}(C(x_S,X^0_{N\setminus S}))=v^{SP}(S).
\end{equation}
Note that for every $S\subseteq N\setminus k$, $v^{SP}(S)=v_{-k}^{SP}(S)$, where $v_{-k}^{SP}\in G^{N\setminus\{ k\}}$ denotes the restriction of the TU-game $v^{SP}\in G^N$ to $N\setminus\{ k\}$. Hence,
\begin{equation}
\label{e3}
v^{SP_{-k}}=v^{SP}_{-k} \mbox{ for all $k\in N$}.
\end{equation}
Considering (\ref{e3}) and that Myerson (1980) proved that the Shapley value of a TU-game satisfies a balancedness property, the equations in (\ref{e1}) and (\ref{e2}) are equal. This implies that the Shapley rule satisfies the balancedness property.

Suppose now that there exists another rule $R\neq SSh$ for stochastic scheduling problems with delays that satisfies the balancedness property. As $R\neq SSh$, there must exist $SP=(N,\prec,X^0,x,C)\in{\cal SP}$ with $R(SP)\neq SSh(SP)$. Assume that $SP$ is minimal, in the sense that: (a) $|N|=1$, or (b) $|N|\geq 2$ and $R(SP_{-i})= SSh(SP_{-i})$ for every $i\in N$.\footnote{ This assumption is without loss of generality because if $SP\in {\cal SP}^N$ is not minimal, we can eliminate one by one the elements of $N$ until we have a minimal $SP'$ with $R(SP' )\neq SSh(SP' )$.} Note that $|N|\neq 1$ because otherwise, $R(SP)=C(x)= SSh(SP)$; hence, $|N|\geq 2$. Take $i,j\in N$ with $i\neq j$. As $R$ and $SSh$ satisfy the balancedness property, then
$$R_i(SP)-R_j(SP)=R_i(SP_{-j})-R_j(SP_{-i}),$$
$$SSh_i(SP)-SSh_j(SP)=SSh_i(SP_{-j})-SSh_j(SP_{-i}).$$
Now, considering the minimality of $SP$, 
$$R_i(SP)-R_j(SP)=SSh_i(SP)-SSh_j(SP)$$
or, equivalently, $R_i(SP)-SSh_i(SP)=A\in\mathbb{R}$, i.e. it does not depend on $i$. But then, $A=0$ because $\sum_{j\in N} R_j(SP)=C(x)=\sum_{j\in N}SS_j(SP)$. This implies that $R(SP)= SSh(SP)$, and the proof is concluded.
\end{proof}

In the remainder of this section, we illustrate the performance of the Shapley rule in two examples. 

Note first that the Shapley rule behaves in Example \ref{example1} as desired. For the stochastic scheduling problem with delays $SP$, we can easily check that:
\begin{itemize}
	\item $v^{SP}(1)=E(C(7,X_2^0))=13/12$,
	\item $v^{SP}(2)=E(C(X_1^0,7))=29/20$,
	\item $v^{SP}(N)=C(7,7)=1$,
\end{itemize}
and then, $SSh(SP)=(0.31666, 0.68333)$. Thus, activity $2$ receives a larger part of the delay cost, as it should. Note that in this example, $SSh(SP)$ can be easily exactly calculated. In general, $SSh$ cannot be exactly calculated, but can be estimated using simulation techniques. Consider now a new example that is slightly more complex.

\begin{exa}
	\label{example2}
	Consider the stochastic scheduling problem with delays $SP=(N,\prec,X^0,x,C)$ given by:	
	$$\begin{tabular}{|c||c|c|c|c|c|}
	\hline
	$N$&1&2&3&4&5\\\hline\hline
	$\prec$&-&1&-&1,3&2\\\hline
	$X^0$&t(1,2,3)&t(1/2,1,3/2)&t(1/4,1/2,9/4)&t(3,4,5)&exp(1/2)\\\hline
	$x$&2.5&1.25&2&4.5&3\\\hline
	\end{tabular}$$
	and, for every $y\in\mathbb{R}^N$,
	$$C(y)=\left\{\begin{array}{lc}
	0&\mbox{if $d(N,\prec,y)\leq 6.5$,}\\
	d(N,\prec,y)-6.5&\mbox{otherwise,}
	\end{array}\right.$$
	where $t(a,b,c)$ denotes the triangular distribution with parameters $a$, $b$, and $c$, and exp($\alpha $) denotes the exponential distribution with parameter $\alpha$. As we remarked in Example \ref{example1}, the table does not give the entire binary relation $\prec$ but only the immediate precedences. For instance, because 1 precedes 2, 2 precedes 5 and $\prec$ is transitive, then 1 must precede 5; however, the table only indicates that 2 precedes 5. The entire $\prec$ is easily obtained as the smallest transitive binary relation over $N$ that contains the given elements of $\prec$. In this case, the table displays
	$$(1,2), (1,4), (3,4), (2,5)$$
	and then 
	$$\prec=\{(1,2), (1,4), (1,5), (3,4), (2,5)\}.$$
	In some cases, it is more instructive to give the PERT graph representing the precedences instead of the precedences and $\prec$. The PERT graph in this example is given in Figure \ref{figure_1}, where, for each arc, we indicate the activity that it represents and the duration of this activity according to $x$. 
	\begin{figure}
	\centerline{\includegraphics[scale=0.75]{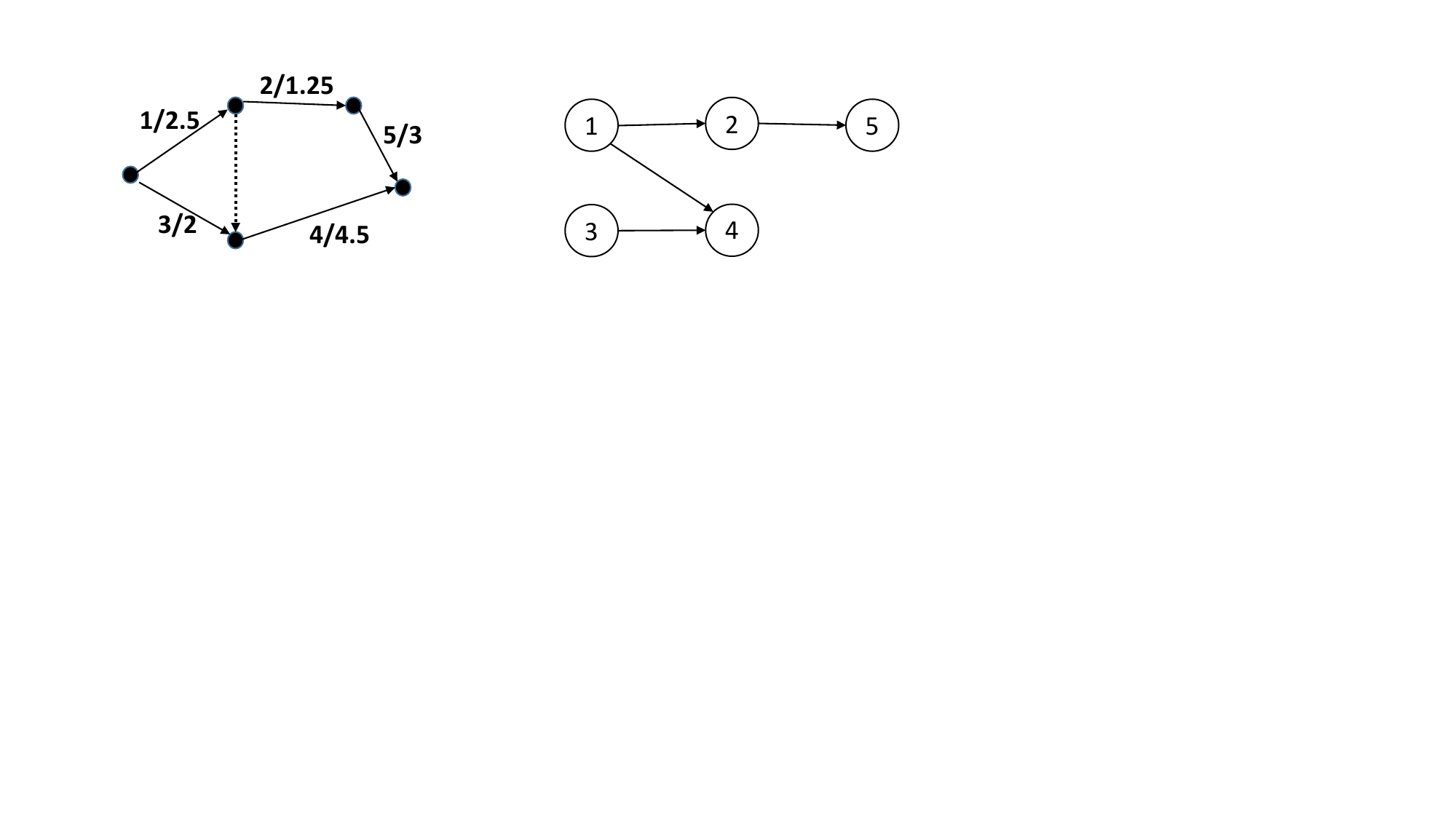}}
	\caption{PERT graph of the project in Example \ref{example2}}
	\label{figure_1}
	\end{figure}
	It is easy to check that $d(N,\prec,x)=7$ (remember that the duration of a project is equal the duration of its longest path in the PERT graph), and then $C(x)=0.5$. To allocate this cost amongst the activities in a fair way, note first that $E(X^0)=(2,1,1,4,2)$, and thus, all activities have a delay with respect to their expected durations. If we take the naive approach, we can allocate the delay cost by using the Shapley rule for $\overline{SP}=(N,\prec,E(X^0),x,C)$. In this case,
	$$Sh(\overline{SP})=(0.27083, 0.02083, 0, 0.18750, 0.02083).$$
	At first sight, this is a reasonable allocation of the delay cost. Activities 1 and 4 belong to the longest path in project $(N,\prec,x)$, and thus, receive most of the delay cost. The cost allocated to activity 1 is greater than that allocated to activity 4 because activity 1 also belongs to a path with a duration greater than $6.5$ (the path 1-2-5 has duration $6.75$). Activity 3 only belongs to one path with duration $6.5$, and produces no delay cost. Therefore, it pays $0$. However, note that this allocation does not consider the probability distributions of the durations of the activities but only their averages. For instance, the duration of activity 5 follows an exponential distribution, the support for which is $[0,\infty)$. This means that its duration can be very long, and therefore, can produce a longer delay. However, its duration is not very long; so, in a sense, activity 5 contributes to a lack of delay in the project. This is captured by the Shapley rule for stochastic scheduling problems with delays. Using elementary simulation techniques, $SSh(SP)$ can be estimated and the result is
	$$SSh(SP)=(0.28960,0.09834,0.07641,0.20659,-0.17095).$$
	It should be noted that this allocation differs from $Sh(\overline{SP})$ primarily in that activity 5 receives a kind of reward for not being too late, where this reward is paid by activities 1, 2, and 4, which last longer than expected and belong to paths whose durations entail a delay cost. 
	
	We now use a small simulation experiment indicating that, on the average, when $x$ is drawn from $X^0$, the cost allocation provided by SSh causes activity 5 to pay the largest part of the delay cost. We then realise that SSh tends to allocate the delay cost to activities 1, 4, and 5, but that it is very sensitive to the durations of the activities. We simulated 1,000 times the durations of the activities such that the 1,000 corresponding durations of the projects were greater than $6.5$, i.e. we simulated $(x^i)_{i\in\{ 1,\dots ,1,000\}}$, each $x^i_j$ being an  observation of $X^0_j$, all drawn independently and in such a way that $C(x^i)>0$. Thus, we obtained 1,000 stochastic scheduling problems with delays $SP^i=(N,\prec,X^0,x^i,C)$ as well as their 1,000 associated proposals of the Shapley rule $SSh(SP^i)$. We then calculated
	\begin{equation}
	\label{e4}
	\sum_{i\in\{ 1,\dots ,1000\}}\frac{SSh(SP^i)}{1000}=(0.12857, 0.06844, 0.06686, 0.10757, 0.93790),
	\end{equation}
	where the average observed cost was $1.30935$. Note that (\ref{e4}) showed that, in effect, when there are positive delay costs in an implementation of the stochastic project $SP=(N,\prec,X^0)$ the delay cost function being $C$, the cost allocation provided by $SSh$ primarily burdens activity 5. This suggests that the vector of actual durations $x$ that we handle in this example could be considered atypical because $SSh_5(SP)<0$.
	Figure \ref {figure_2} confirms it. It displays the density estimations of the variables $Z^1_i$ (solid line) and $Z^2_i$ (dotted line), $i\in\{ 1,\dots 5\}$, such that
	\begin{itemize}
		\item $Z^1_i$ is the $i$-th component of $Sh((N,\prec,E(X^0),X,C))$, where $X$ denotes the random variable corresponding to an observation of $X^0$; and
		\item $Z^2_i$ is the $i$-th component of $SSh((N,\prec,X^0,X,C))$, where $X$ denotes the random variable corresponding to an observation of $X^0$.
	\end{itemize} 
Note that the scales of the five graphics in Figure \ref {figure_2} are different, which is a relevant feature to interpret them. It is not possible to adjust the scales while maintaining the informative graphics. From the figure, we see that the probability that $Sh_5((N,\prec,E(X^0),X,C))<0$ is not high. It is interesting to note that the variables $Z^1_i$ and $Z^2_i$ are significantly different for each $i$, which strengthens the interest of the rule $SSh$.

Nota Tabla \ref{pos.neg}: En esta red PERT, hay $3$ caminos. El camino dado por las actividades $1-2-5$, $1-4$ y $3-4$. El camino $3-4$ nunca genera retraso. El camino $1-4$ es aquel que más veces genera retraso, y el $1-2-5$, dado que $5$ es una exponencial, cuando genera retraso es muy elevado. Lo que sucede en el caso determinista es que en el camino $1-2-5$, si la actividad $5$ no se retrasa este camino tampoco lo hace provocando que las tres actividades reciban un pago de $0$. En cambio si la actividad $5$ toma un valor alto, esta recibe un pago positivo y las actividades $1$ y $2$ pueden recibir tanto un pago positivo o negativo dependiendo de sus realizaciones. Por tanto, en el caso determinista, $5$ nunca recibe un pago negativo. En el caso estocástico, se puede apreciar que el $48.3\%$ de las veces la actividad $5$ toma un valor suficientemente alto para generar retraso y recibir un pago positivo, en cambio, el $51.7\%$ no se retrasa, y siguiendo el contexto determinista recibiría un pago de $0$, pero al considerar la variabilidad se tiene en cuenta su distribución y en como podría afectar negativamente al proyecto, generando un retraso elevado, y no lo hace obteniendo un pago negativo.
\begin{table}[htbp]
	\begin{minipage}[b]{0.5\linewidth}

		\begin{center}
			\begin{tabular}{|c||c|c|c|c|c|}
				\hline
				Sh& $1$& $2$& $3$& $4$& $5$ \\
				\hline \hline
				$\geq 0$ & $70.5$ & $74.9$& $100$& $91.1$& $100$\\ \hline
				$<0$ & $29.5$& $25.1$& $0.0$& $8.9$& $0$ \\ \hline
			\end{tabular}
		
		\end{center}

\end{minipage}
	\begin{minipage}[b]{0.5\linewidth}
	\begin{center}
		\begin{tabular}{|c||c|c|c|c|c|}
			\hline
			SSh& $1$& $2$& $3$& $4$& $5$ \\
			\hline \hline
			$\geq 0$ & $75.0$ & $85.5$& $100$& $95.5$& $48.3$\\ \hline
			$<0$ & $25$& $14.5$& $0.0$& $4.5$& $51.7$ \\ \hline
		\end{tabular}
	
	\end{center}
\end{minipage}

\caption{Positive and negative payments for the Sh rule (left) and SSh rule (right).}
\label{pos.neg}
\end{table}

	\begin{figure}
	\includegraphics[scale=0.5]{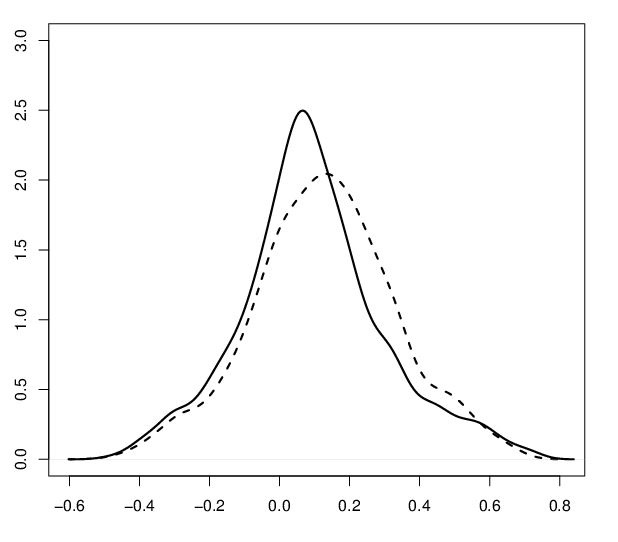}
	\includegraphics[scale=0.5]{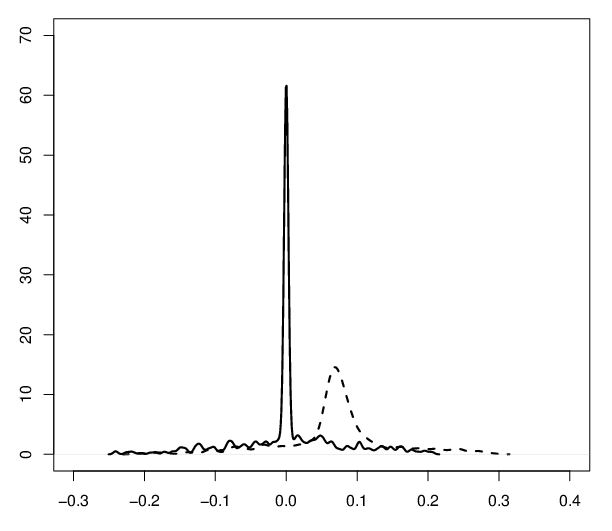}
	\includegraphics[scale=0.5]{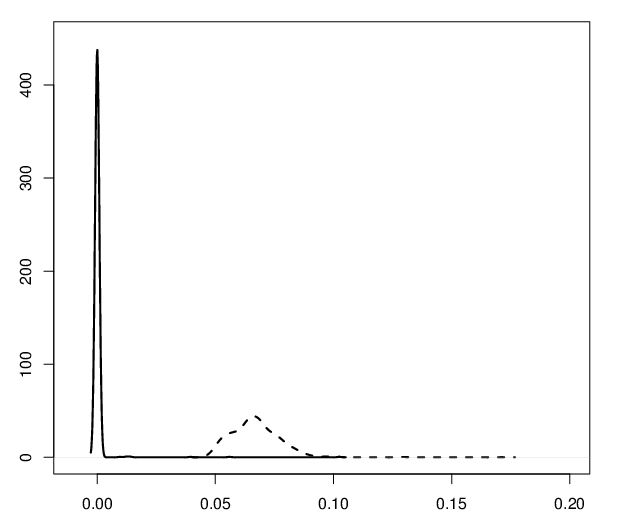}
	\includegraphics[scale=0.5]{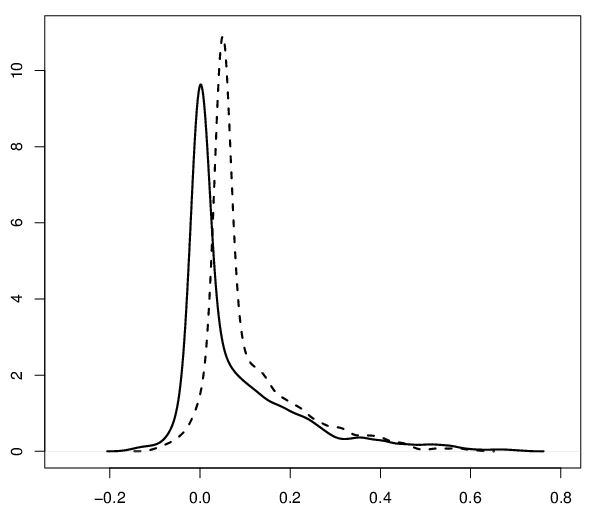}
	\includegraphics[scale=0.5]{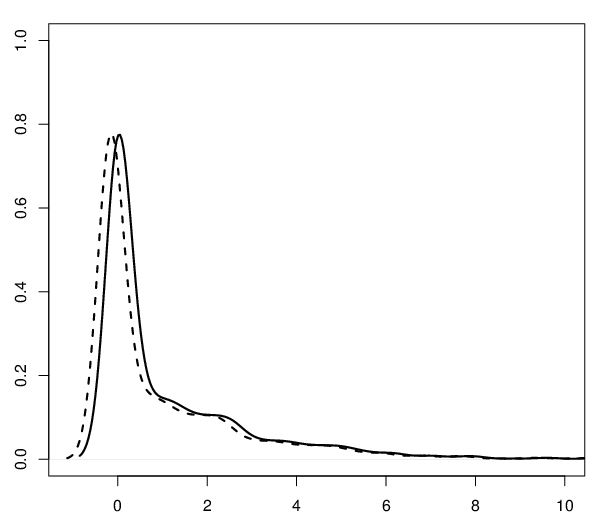}
		\caption{Density estimations of the variables $Z^1_i$ (solid line) and $Z^2_i$ (dotted line)}
		\label{figure_2}
	\end{figure}
\end{exa}
		
\section{Computational Analysis}

The calculation of the Shapley value has, in general, an exponential complexity. Although equivalent expressions with polynomial complexity can be used in some game classes, this is not the case for the class of games with which we are dealing. Calculating the Shapley value in our context is impossible in practise, even for a moderate number of activities. For example, if the number of activities is $100$, there are $2^{100}$ coalitions in which the characteristic function must be evaluated. As an alternative to exact calculation, Castro et al. (2009) proposed an estimate of the Shapley value in polynomial time using a sampling process. We also need to calculate $v^{SP}(S)=E(C(x_S,X^0_{N\setminus S}))$, with $S\subseteq N$. In some simple cases, this value can be calculated in a simple way using the properties of order statistics; but in general, we use simulations to approximate $v^{SP}$.

The aims of this section are twofold: First, to illustrate the implementation of the computation of the Shapley rule through its pseudocode, from which it is easy to check that the computational complexity of our rule is $O(n^4)$; and second, to show by examples that it is possible to approximate the Shapley rule for stochastic scheduling problems with delays in an acceptable time, even if there are hundreds of activities, by using a desktop computer and free software. The error in the two phases of estimation is tracked a posteriori through the estimation of variance and central limit theorem.
	
The first task of code is to reorder the precedence matrix: That is to say, if the value at $(i,j)$ is equal to $1$, it means that $i$ precedes $j$, and we want $i< j$.  Note that this task can always be carried out and allows for faster calculation. We denote the $i$-th row of matrix $P$ by $P_{i,\cdot}$ and $i$-th column by $P_{\cdot,i}$.

\vspace{0.25cm}

\textbf{Organise precedence matrix}
\rule[1mm]{60mm}{0.1mm}
\begin{itemize}
	\item \textbf{Begin}
	\subitem $P=precedence$, $index=NULL$
	\subitem \textbf{While} number of P's columns $>0$
	\subsubitem Take all $i\in n$ such that $ \sum_{j=1}^{n}P_{ij}=0$
	\subsubitem $index=(index,i)$
	\subsubitem $P=P\backslash P_{i,\cdot}$ and $P=P\backslash P_{\cdot,i}$
	\subitem \textbf{end}
	\subitem $precedence=precedence_{index,index}$
	\item \textbf{end}
\end{itemize}

\rule[4mm]{100mm}{0.1mm}

The code computes the early times for a deterministic scheduling problem when the duration of the activities is given by $x^{0}$. The early time of an activity is the earliest that this activity can begin.

\vspace{0.25cm}

\textbf{Early times}
\rule[1mm]{80mm}{0.1mm}
\begin{itemize}
	\item \textbf{Begin}
	\subitem $early.times_{i}=0$ $\forall i\in N$
	\subitem \textbf{Organise} precedence matrix
	\subitem $I=\{i\in n,$ such that $ \sum_{j=1}^{n}precedence_{ji}\neq0\}$
	\subitem \textbf{For} $i\in I$
	\subsubitem  $prec=\{j\in n/precedence_{ji}=1\}$
	\subsubitem  $early.times_{i}=\max\{x^{0}_{prec}+early.times_{prec}\}$
	\subitem \textbf{end}
	\item \textbf{end}
\end{itemize}

\rule[4mm]{100mm}{0.1mm}

Let us consider a deterministic scheduling problem with delays with delay cost function, for every $y\in\mathbb{R}^N$, given by:
$$C(y)=\left\{\begin{array}{lc}
0&\mbox{if $d(N,\prec,y)\leq\delta$,}\\
d(N,\prec,y)-\delta&\mbox{otherwise.}
\end{array}\right.$$
We obtain an estimation of the Shapley rule in polynomial time. The algorithm consists of taking $m\in\mathbb{N}$ permutations of the set of players $N$ with equal probability (Castro et al., 2009). We denote by $\Pi_{N}$ the set of permutations of $N$. We then calculate $|N|$ real numbers as follows:
\[\pi^{j} \in \Pi_{N}\ \ \text{where}\ \ \pi^j=(\pi^j_{1},...,\pi^j_{|N|})\ \ \text{and}\ \ j\in\{1,...,m\} \]
\[x(\pi^j)_{i}=v(Pre^{i}(\pi^j)\cup\{i\})-v(Pre^{i}(\pi^j))\]
where $Pre^{i}(\pi^j)=\{\pi^j_{z}\in\pi^j; z<i\}$; $x(\pi^j)\in \mathbb{R}^{|N|}$ is the corresponding allocation vector. Finally, the estimated value of the Shapley value is:
\[\hat{Sh}_i=\frac{1}{m}\sum_{j\in 1}^{m}x(\pi^j)_{i}\]
for all $i \in N$.

When we address the stochastic version of the problem, we can use nearly the identical procedure to that in the deterministic case; but in this new situation, we need to estimate the TU-game. For this, we simulate the TU-game $m_1\in\mathbb{N}$ times and take the average of these values.

\vspace{0.25cm}

\textbf{Estimation of Shapley rule in the stochastic case}
\rule[1mm]{10mm}{0.1mm}
\begin{itemize}
	\item \textbf{Begin}
	\subitem Determine $m$ and $m_1$
	\subitem $Cont=0$, $\hat{Sh}_{i}=0$, $v_{i}=0$ $\forall i\in N$ and $time_{j}=0$ $\forall j\in m_1$
	\subitem \textbf{For} $j\in m_1$
	\subsubitem $\hat{X}^0_{j,\cdot}=sample(X^0)$
	\subitem end
	\subitem \textbf{Organise} precedence matrix
	\subitem \textbf{While} $cont<m$
	\subsubitem Take $\pi \in \Pi_{N}$ with probability $\frac{1}{n!}$
	\subsubitem \textbf{For} $i\in n$
	\subsubitem \ \ \textbf{For} $j\in m_1$
	\subsubitem \ \ \ \ \textbf{Early times} of $(x_{Pre^{i}(\pi)\cup \{i\}},\hat{X}^0_{j, N\setminus \{Pre^{i}(\pi)\cup \{i\}\}})$
	\subsubitem \ \ \ \ $\displaystyle v_{j}=\max\{\max\{early.times+(x_{Pre^{i}(\pi)\cup \{i\}},\hat{X}^0_{j,N\setminus \{Pre^{i}(\pi)\cup \{i\}\}})\}-\delta,0\}$ 
	\subsubitem  \ \ end
	\subsubitem \ \ $v_i=mean(time)$
	\subsubitem end
	\subsubitem $\hat{Sh}_{\pi_{1}}=\hat{Sh}_{\pi_{1}}+v_{1}$
	\subsubitem $\hat{Sh}_{\pi_{i}}=\hat{Sh}_{\pi_{i}}+v_{i}-v_{i-1}$ $\forall i\in N\backslash\{1\}$
	\subsubitem $cont=cont +1$
	\subitem end
	\subitem $\hat{Sh}=\frac{\hat{Sh}}{m}$
	\item \textbf{end}
\end{itemize}

\rule[4mm]{100mm}{0.1mm}

To gain insight into the computation time needed to obtain a solution, we selected five problems\footnote{These problems were too large to be included in this paper. They can be downloaded from http://dm.udc.es/profesores/ignacio/stochasticprojects} with a number of activities ranging from 10 to 1,000. We ran the problems on a PC with a $3.70$ GHz Core i$7$-$8700$K, and $64$ GB of RAM on an Ubuntu $64$-bits. The programming language used was \textbf{R} x$64$ $3.4.4$. It is freely available under the GNU General Public License. To improve performance in terms of time, we used the packages \textbf{Rcpp} and \textbf{parallel}. The package Rcpp was used to write in C the function \textit{early times} and parallel was used to parallelise the estimation of the Shapley value by using six cores of our computer.

Table \ref{tiempos.simu} shows the computation times, in seconds, of the five problems, with 10, 30, 100, 300, and 1,000 activities, respectively. The TU-game was approximated using $m_1=1000$ simulations, while $m=1000$ and $10000$ estimates were used for the Shapley rule.

\begin{table}[htbp]
	\begin{center}
		\begin{tabular}{|c||c|c|c|c|c|}
			\hline
			& $10$& $30$& $100$& $300$& $1000$ \\
			\hline \hline
			$1000$ & $18$ & $120$& $1033$& $7801$& $118770$\\ \hline
			$10000$ & $211$& $1329$& $11941$& $80521$& $1277377$ \\ \hline
		\end{tabular}
		\caption{Computation times in seconds}
		\label{tiempos.simu}
	\end{center}
\end{table}

\begin{table}[htbp]
	\begin{center}
		\begin{tabular}{|c||c|c|c|c|c|}
			\hline
			& $10$& $30$& $100$& $300$& $1000$ \\
			\hline
			\hline
			$v(S)$ & $2.18$ & $2.96$& $4.64$& $2.28$& $0.83$\\ \hline
			$1000$ & $12.92$& $13.49$& $19.37$& $27.88$& $12.92$ \\ \hline
			$10000$ & $4.17$& $4.27$& $6.13$& $8.82$& $4.09$ \\ \hline
		\end{tabular}
		\caption{Errors for $v(S)$ and the Shapley rule}
		\label{errores.simu}
	\end{center}
\end{table}

Table \ref{errores.simu} shows an estimation of errors, both in the approximation of the characteristic function and Shapley rule by using $m=1000$ and $10000$. All errors are relative and in percent.\footnote{As is common in statistical methodology, the relative error in percent of the estimation of a parameter $\theta$ is given by
	$z_{\alpha/2}\frac{{s}}{\sqrt{n}}\frac{100}{\theta}$, 
	where $s$ is the square root of the sample variance.} A significance level of $\alpha =0.05$ was used in these estimates. The error in $v(S)$ is different for every $S\subseteq N$, and therefore, we display the average of 1,000 coalitions chosen in a random way. In the Shapley rule, each activity has an error, and the table shows the average of all activities.

\section*{Acknowledgements}
This work has been supported by the MINECO grants MTM2014-53395-C3-1-P and MTM2017-87197-C3-1-P, and by the Xunta de Galicia through the ERDF (Grupos de Referencia Competitiva ED431C-2016-015 and Centro Singular de Investigaci\'on de Galicia ED431G/01).

\section*{Compliance with ethical standards}

\textbf{Conflict of interest} There is no potential conflicts of interest.\\
\textbf{Ethical standard} Research do not have human participants and/or animals.

\section*{References}
Berganti\~{n}os G, S\'anchez E (2002). How to distribute costs associated with a delayed project. Annals of Operations Research 109, 159-174.\\
Berganti\~{n}os G, Valencia-Toledo A, Vidal-Puga J (2018). Hart and Mas-Colell consistency in PERT problems. Discrete Applied Mathematics 243, 11-20.\\
Br\^anzei R, Ferrari G, Fragnelli V, Tijs S (2002). Two approaches to the problem of sharing delay costs in joint projects. Annals of Operations Research 109, 359-374.\\
Castro J, G\'omez D, Tejada J (2007). A project game for PERT networks. Operations Research Letters 35, 791-798.\\
Castro J, G\'omez D, Tejada J (2009). Polynomial calculation of the Shapley value based on sampling. Computers \& Operations Research 36, 1726-1730.\\
Castro J, G\'omez D, Tejada J (2014). Allocating slacks in stochastic PERT networks. CEJOR 22, 37-52.\\
Herroelen W, Leus R (2005). Project scheduling under uncertainty: survey and research potentials. European Journal of Operational Research 165, 289-306.\\
Hillier FS, Liberman GJ (2001). Introduction to Operations Research. McGraw-Hill.\\
Est\'evez-Fern\'andez A (2012). A game theoretical approach to sharing penalties and rewards in projects. European Journal of Operational Research 216, 647-657.\\
Est\'evez-Fern\'andez A, Borm P, Hamers H (2007). Project games. International Journal of Game Theory 36, 149-176.\\
Flores, R, Molina, E, and Tejada, J (2007). Evaluating groups with the generalized Shapley value. 4OR, 1-32.\\
Myerson RB (1980). Conference structures and fair allocation rules. International Journal of Game Theory 9, 169-182.\\
Shapley LS (1953) A value for n-person games. In: Kuhn HW, Tucker AW (eds) Contributions to the Theory of Games II. Princeton University Press, Princeton
\end{document}